\def\1{\mathbbm{1}}
\theoremstyle{definition} % Define theorem styles here based on the definition style (used for definitions and examples)
\newtheorem{definition}{Definition}%[section]
\theoremstyle{plain} % Define theorem styles here based on the plain style (used for theorems, lemmas, propositions)
\newtheorem{theorem}{Theorem}%[section]
\newtheorem{corollary}{Corollary}%[section]
\newtheorem{proposition}{Proposition}%[section]
\newtheorem{que}{Question}
\title{On the complexity of solving a decision problem with flow-depending costs: the case of the IJsselmeer dikes}
\author{Aida Abiad\thanks{Corresponding author. Department of Quantitative Economics, Maastricht University, Maastricht, The Netherlands. \texttt{A.AbiadMonge@maastrichtuniversity.nl}.}
    \and Sander Gribling\thanks{CWI, Amsterdam, The Netherlands. \texttt{gribling@cwi.nl}}
    \and Domenico Lahaye\thanks{Delft Institute for Applied Mathematics, The Netherlands. \texttt{d.j.p.lahaye@tudelft.nl}}
    \and Matthias Mnich\thanks{Department of Quantitative Economics, Maastricht University, Maastricht, The Netherlands. \texttt{m.mnich@maastrichtuniversity.nl}}
    \and Guus Regts\thanks{Korteweg de Vries Institute for Mathematics, University of Amsterdam, The Netherlands. \texttt{guusregts@gmail.com}}
    \and Lluis Vena\thanks{Korteweg de Vries Institute for Mathematics, University of Amsterdam, The Netherlands. \texttt{lluis.vena@gmail.com}}
    \and Gerard Verweij\thanks{CPB Netherlands Bureau for Economic Policy Analysis, The Hague, The Netherlands. \texttt{G.Verweij@cpb.nl}}
    \and Peter Zwaneveld\thanks{CPB Netherlands Bureau for Economic Policy Analysis, The Hague, The Netherlands. \texttt{P.J.Zwaneveld@cpb.nl}}
}
\date{}
\begin{document}
\maketitle

\begin{abstract}
\noindent
  We consider a fundamental integer programming (IP) model for cost-benefit analysis flood protection through dike building in the Netherlands, due to Verweij and Zwaneveld.
  Experimental analysis with data for the Ijsselmeer lead to integral optimal solution of the linear programming relaxation of the IP model.
  This naturally led to the question of integrality of the polytope associated with the IP model.

  In this paper we first give a negative answer to this question by establishing non-integrality of the polytope.
  Second, we establish natural conditions that guarantee the linear programming relaxation of the IP model to be integral.
  We then test the most recent data on flood probabilities, damage and investment costs of the IJsselmeer for these conditions.
  Third, we show that the IP model can be solved in polynomial time when the number of dike segments, or the number of feasible barrier heights, are constant.
\end{abstract}

\noindent
\textbf{Keywords.} Cost-benefit analysis; dynamic programming; integer programming

\clearpage
\pagebreak

\section{Introduction}
\label{sec:introduction}

Protection against increasing sea levels is an important issue around the world, including the Netherlands.
Optimal dike heights are of crucial importance to the Netherlands as almost $60\%$ of its surface is under threat of flooding from sea, lakes, or rivers.
This area is protected by more than $3500$ kilometers of dunes and dikes, which require substantial yearly investments of more than one billion Euro~\cite{CPB17Stan}.

Recently, Zwaneveld and Verweij~\cite{CPB17Stan} presented an integer programming (IP) model for a cost-benefit analysis
to determine optimal dike heights that allows highly flexible input parameters for flood probabilities, damage costs and investment costs for dike heightening.
Their model improves upon an earlier model by Brekelmans et al.~\cite{BHRE2012}, who presented a dedicated approach without optimality guarantee, and which was in turn an improvement of the original model by van Dantzig~\cite{vandantzig56} from 1956.
The latter was introduced after a devastating flood in the Netherlands in 1953, with the goal of designing a long-lasting cost-efficient layout for a dike ring.

Our work is based on the integer programming model of Bos and Zwaneveld~\cite{CPB12}, Zwaneveld and Verweij~\cite{CPB14} and a recent manuscript by Zwaneveld and Verweij~\cite{CPB17}, where the authors study the problem of economically optimal flood prevention in a situation in which multiple barrier dams and dikes protect the hinterland to both sea level rise as well as peak river discharges.
Current optimal flood prevention methods (Kind~\cite{K2011}, Brekelmans et al.~\cite{BHRE2012}, Zwaneveld and Verweij~\cite{CPB17Stan}) only consider single dike ring areas with no interdependency between dikes.
Zwaneveld and Verweij~\cite{CPB14,CPB17} present a graph-based model for a cost-benefit analysis to determine optimal dike heights with multiple interdependencies between dikes and barrier dams.
Zwaneveld and Verweij~\cite{CPB14} identify several solution approaches (e.g.~a dynamic programming heuristic and branch-and cut), and they also show that it can be solved quickly to proven optimality using a branch-and-cut approach for real world problem instances.

The natural question arising from the work of Zwaneveld and Verweij~\cite{CPB17} is whether the linear programming relaxation of their IP model always admits an integral optimum.

\subsection{Our contribution}
Our first contribution is a negative answer to the question above.
In particular, we show that the polytope associated to the IP model of Zwaneveld and Verweij~\cite{CPB14,CPB17} is not necessarily integral.

Second, we derive sufficient conditions that ensure the LP relaxation to be integral.
We then experimentally verify whether these conditions are met by the most recent data on flood probabilities, damage and investment costs, which are presently used by the Dutch government.
Finally, we show that the optimal dike heightening problem can be solved in polynomial time if either the number of barrier heights or the number of dike segments is constant.

This paper is organized as follows.
In Section~\ref{sec:modeldesc} we recap the IP model of Zwaneveld and Verweij~\cite{CPB17} that forms the subject of our investigations.
In Section~\ref{sec:integrability} we discuss integrality of the polytope.
In Section~\ref{sec:altapproach} we propose an alternative approach to solve the problem by means of dynamic programming.
Finally, in Section~\ref{sec:abstraction} we present a natural abstract version of the dike height problem, which allows for several variations and open problems.

\section{Integer programming model}
\label{sec:modeldesc}
In this section we present the IP model formulated by Zwaneveld and Verweij~\cite{CPB17}.
Before going into the details of the IP model, let us introduce some important terminology and the geographical configuration of the dikes in the Netherlands.
A \emph{dike segment} is a part of a dike that is protecting a region.
It is possible that several segments protect the same area and in that case they are called a \emph{dike ring}.
In the Netherlands, dike ring areas and smaller dikes lie beneath the \emph{Afsluitdijk} (or \emph{barrier dam}) which is the outermost dike located in the north.
The Afsluitdijk separates the North Sea and the IJsselmeer, an artificial lake; see Fig.~\ref{fig:ijsselmeer} for an illustration.

\begin{figure}[htpb]
  \begin{center}
     \includegraphics[scale=0.5]{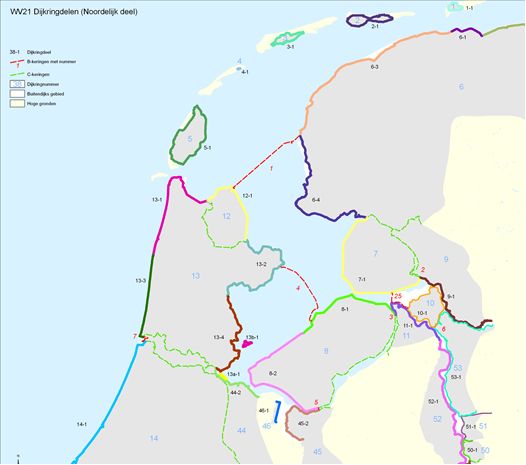}
  \end{center}
  \caption{\label{fig:ijsselmeer}The relative locations of the North Sea, the IJsselmeer, and the Afsluitdijk and the dike ring enclosing it.}
\end{figure}

The IP model uses the following data:
\begin{itemize}
  \item $D$ is the set of dike segments.
  \item $H_D$ is the set of possible heights for a dike segment. For ease of notation, we do not let $H_D$ depend on the dike segment, i.e., all dike segments have the same set of possible heights. % In the current presentation, for this set, no distinction is made between dike segments (i.e., each dike segment has the same set of possible heights).
  We denote the height of a previous year by $h_1$, and that of the current year by~$h_2$.
  Likewise, $H_B$ is the set of possible heights for the barrier dam and we denote the height of the barrier in the previous year by $h_1^B$, and that of the current year by $h_2^B$.
  \item $T$ is the set of time periods at which changes to a dike segment can be made (e.g., one can assume that changes are scheduled per year), for simplicity we assume (with abuse of notation) $T =  \{0,1, \ldots, T\}$.
\end{itemize}

The decision variables are:
\begin{itemize}
  \item $CY(t,d,h_1,h_2)$: this binary variable takes value 1 if dike ring $d$ is updated in time period $t$ from height $h_1$ up to height $h_2$.
    If $h_1=h_2$ then this dike ring segment is not strengthened in period $t$ and remains at its previous height.
    This decision variable is used for tracking investment (and maintenance) costs. %\\     0 otherwise.
  \item $DY(t,d,h_2,h_2^B)$: this binary variable takes value 1 if at the end of period $t$ the barrier dam has height $h_2^b$, and dike segment $d$ is of height $h_2$.
    This variable is used to connect investments in dike segments (and the barrier dam) to expected damages.
    Another way to view it is that this variable linearizes the $0$-$1$ variable $\left(\sum_{h_1} CY(t,d,h_1,h_2)\right) \big(\sum_{h_1^B} B(t,h_1^B,h_2^B)\big)$. %It is one if segment $d$ with height $h_2$ represents the weakest link in period $t$, i.e., the segment with the highest flood probability such that a dike ring starts to fail at this segment. This decision variable is used for bookkeeping flood probabilities and related expected damage costs.
  \item $B(t,h_1^B,h_2^B)$: this binary variable take value 1 if the barrier dam is updated in time period $t$ from height $h_1^B$ up to $h_2^B$.
    If $h_1^B=h_2^B$ then the barrier dam is not strengthened in period $t$ and remains at its previous height.
    This decision variable is used for bookkeeping investment (and maintenance) costs, flood probabilities and related expected damage costs of the barrier dam.
\end{itemize}

The input parameters are:
\begin{itemize}
  \item $D_{\text{cost}}(t,d,h_1,h_2)$, the cost for investment and maintenance, if dike ring $d$ is strengthened in time period $t$ from $h_1$ to $h_2$.
    If $h_1=h_2$, the dike ring segment is not strengthened and these costs only represent maintenance costs.
  \item $D_{\text{expdam}}(t,d,h_2,h_2^B)$, the expected damage, i.e.,
  \[
    D_{\text{expdam}}(t,d,h_2,h_2^B)= \text{prob}(t,d,h_2,h_2^B) \times \text{damage}(t,d,h_2,h_2^B),
  \]
  where $\text{prob}(t,d,h_2,h_2^B)$ and $\text{damage}(t,d,h_2,h_2^B)$ are respectively the probability of failure and the expected damage cost (the latter given that there is a flooding) in period $t$ given the height of the segment $h_2$ and the height of the barrier $h_2^B$. %if the resulting height of dike ring $d$ in period $t$ equals $h_2$ and the barrier dam equals $h_2^B$.
  Note that it is assumed that both the probability of failure and the expected damage upon failure of dike segment~$d$ only depend on the height of segment $d$ and that of the barrier dam.
  \item $B_{\text{cost}}(t,d,h_1^B,h_2^B)$, the cost for investment and maintenance, if the barrier dam is strengthened in time period $t$ from $h_1^B$ to $h_2^B$.
    If $h_1^B=h_2^B$, the barrier dam is not strengthened and these costs only represent maintenance costs.
  \item $B_{\text{expdam}}(t,h_2^B)$, the expected damage of a flooding of the barrier dam, i.e.~$\text{prob}(t,h_2^B)\times \text{damage}(t,h_2^B)$, here $\text{prob}(t,h_2^B)$ and $\text{damage}(t,h_2^B)$ are respectively the probability of failure and the expected damage cost (the latter given that there is a flooding), in period $t$ given the height of the barrier $h_2^B$. %  if the resulting height of the barrier dam in period $t$ equals $h_2^B$.
\end{itemize}

All input parameters are calculated in net present value of a certain year (i.e.~2020, which is the starting year for our calculations) and represent price levels in a certain year.

All in all, the IP model then reads as follows:
\begin{align}
  \min~ & \sum_{t\in T} \sum_{d\in D} \sum_{h_1\in H_D} \sum_{h_2\geq h_1} D_{\text{cost}}(t,d,h_1,h_2) \cdot CY(t,d,h_1,h_2) \label{eq.of_1}\\[2mm]
      + & \sum_{t\in T} \sum_{d\in D} \sum_{h_2\in H_D} \sum_{h_2^B} D_{\text{expdam}}(t,d,h_2,h_2^B) \cdot DY(t,d,h_2,h_2^B) \label{eq.of_2}\\[2mm]
      + & \sum_{t\in T} \sum_{h_1^B\in H^B} \sum_{h_2^B\geq h_1^B} \left(B_{\text{cost}}(t,h_1^B,h_2^B)+B_{\text{expdam}}(t,h_2^B)\right) \cdot B(t,h_1^B,h_2^B) \label{eq.of_3}
\end{align}
subject to
\begin{align}
\label{eq.ini} CY(0,d,0,0)=1, CY(0,d,h_1,h_2)=0 \quad &\forall d \in D, h_1,h_2\in H_D, h_2\geq h_1\wedge h_2> 0  \\[2mm]
\label{eq.222} \sum_{h_1\leq h_2} CY(t-1,d,h_1,h_2)=\sum_{h_3\geq h_2}CY(t,d,h_2,h_3) \quad &\forall t \in T_{>0},d\in D,h_2\in H_D\\[2mm]
\label{eq.3}\displaystyle\sum_{h_1\leq h_2} CY(t,d,h_1,h_2)=\displaystyle\sum_{h_2^B}DY(t,d,h_2,h_2^B) \quad &\forall t \in T,d\in D,h_2\in H_D\\[2mm]
\label{eq.ini2}B(0,0,0)=1, B(0,h_1^B,h_2^B)=0 \quad  &\forall h_1^B,h_2^B\in H_B, h_2^B\geq h_1^B\wedge h_2^B>0 \\[2mm]
\label{eq.4}\displaystyle \sum_{h_1^B\leq h_2^B} B(t-1,h_1^B,h_2^B)=\displaystyle\sum_{h_3^B\geq h_2^B}B(t,h_2^B,h_3^B) \quad &\forall t \in T\backslash \{0\},d\in D,h_2^B\in H_B\\[2mm]
\label{eq.5a}\displaystyle\sum_{h_1^B\leq h_2^B} B(t,h_1^B,h_2^B)=\displaystyle\sum_{h_2}DY(t,d,h_2,h_2^B) \quad &\forall t \in T,d\in D,h_2^B\in H_B\\[2mm]
\label{eq.ini3}CY(t,d,h_1,h_2)\in\{0,1\} \quad &\forall t\in T,d\in D, h_1\in H_D,h_2\geq h_1 \in H_D\\[2mm]
\label{eq.ini4}DY(t,d,h_2,h_2^B)\in\{0,1\} \quad &\forall t\in T,d\in D, h_2\in H_D,h_2^B\in H_B\\[2mm]
\label{eq.ini5}B(t,h_1^B,h_2^B)\in\{0,1\} \quad &\forall t\in T,d\in D, h_2^B\geq h_1^B\in H_B
\end{align}

Equations (\ref{eq.3}) and (\ref{eq.5a}) are the linking constraints between the barrier and the dike segments using the variables $DY$. Equations (\ref{eq.222}) and (\ref{eq.4}) are flow conditions. Equations (\ref{eq.ini}) and (\ref{eq.ini2}) are the initial conditions. Equations (\ref{eq.ini3}), (\ref{eq.ini4}) and (\ref{eq.ini5}) are integrality constraints.

\section{On the integrality of the polytope}
\label{sec:integrability}
The linear programming relaxation of the IP model from the previous section allows the decision variables to take values from the interval $[0,1]$ instead of the integral $\{0,1\}$.
We now give an example showing that the polytope defined by this relaxation can have vertices with non-integral coordinates.
%Thus, this in particularly shows that the matrix of the constraints is not totally unimodular.

The example involves the following sets indexing the variables:
\begin{itemize}
  \item $T=\{0,1,2\}$
  \item one segment. Hence, we remove the dike index from all related variables. %, one barrier.
  \item $H=\{0,1\}$, $H_B=\{0,1\}$
\end{itemize}

The point $P$, candidate to be a vertex of the polytope of the linear relaxation, has the following non-zero values:
%\begin{itemize}
%  \item $CY(t,h_1,h_2)$: $CY(0,0,0)=1$, $CY(1,0,1)=1/2$, $CY(1,0,0)=1/2$, $CY(2,1,1)=1/2$, $CY(2,0,0)=1/2$.
%  \item $B(t,h_1,h_2)$: $B(0,0,0)=1$, $B(1,0,1)=1/2$, $B(1,0,0)=1/2$, $B(2,1,1)=1/2$, $B(2,0,0)=1/2$.
%  \item $DY(t,h_2,h_2^B)$: $DY(0,0,0)=1$, $DY(1,0,1)=1/2$, $DY(1,1,0)=1/2$, $DY(2,1,1)=1/2$, $DY(2,0,0)=1/2$.
%\end{itemize}
\begin{center}
  \begin{tabular}{llllll}
    \toprule
    $(t,h_1,h_2)$     & $(0,0,0)$ & $(1,0,1)$ & $(1,0,0)$ & $(2,1,1)$ & $(2,0,0)$ \\
    \midrule
    $CY(t,h_1,h_2)$   & $1$       & $1/2$     & $1/2$     & $1/2$     & $1/2$\\
    $B(t,h_1,h_2)$    & $1$       & $1/2$     & $1/2$     & $1/2$     & $1/2$\\
    $DY(t,h_2,h_2^B)$ & $1$       & $1/2$     & $1/2$     & $1/2$     & $1/2$\\
    \bottomrule
  \end{tabular}
\end{center}

The example is summarized in Fig.~\ref{f.1} where each arrow corresponds to one of the decision variables.
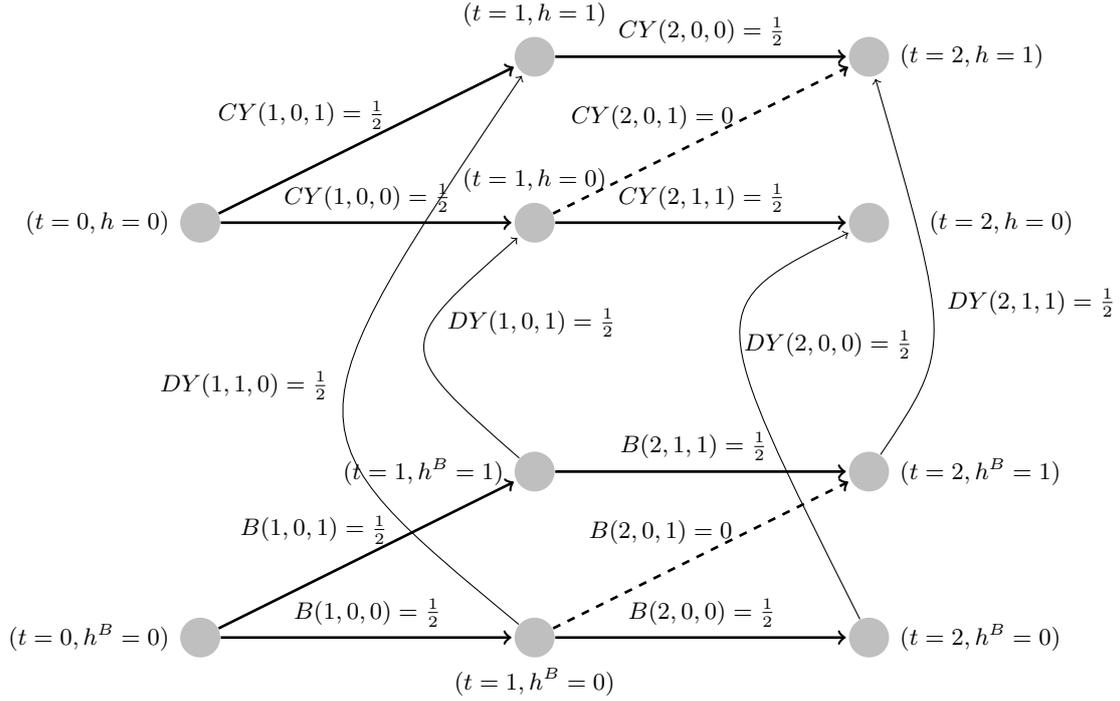
\begin{figure}
\label{f.1}
  \begin{center}
  \begin{tikzpicture}[scale=1.1, shorten >=1pt,->]
    \tikzstyle{vertex}=[circle,fill=black!25,minimum size=15pt,inner sep=0pt]

    \node[vertex,label=left:{\footnotesize{$(t=0,h=0)$}}] (c0) at (0,5) {};
    \node[vertex,label={\footnotesize{$(t=1,h=0)$}}] (c1) at (4,5) {};
    \node[vertex,label={\footnotesize{$(t=1,h=1)$}}] (c2) at (4,7) {};
    \node[vertex,label=right:{~~~\footnotesize{$(t=2,h=0)$}}] (c3) at (8,5) {};
    \node[vertex,label=right:{\footnotesize{$(t=2,h=1)$}}] (c4) at (8,7) {};

    \draw[line width= 1pt] (c0) -- (c1) node[pos=0.5,above]{\footnotesize{$CY(1,0,0)=\frac{1}{2}$}};
    \draw[line width= 1pt] (c0) -- (c2) node[pos=0.5,above]{\footnotesize{$CY(1,0,1)=\frac{1}{2}$~~~~~~~~~~~~~~~~~}};
    \draw[line width= 1pt] (c1) -- (c3) node[pos=0.5,above]{\footnotesize{$CY(2,1,1)=\frac{1}{2}$}};
    \draw[line width= 1pt] (c2) -- (c4) node[pos=0.5,above]{\footnotesize{$CY(2,0,0)=\frac{1}{2}$}};

    \draw[line width= 1pt, dashed] (c1) -- (c4) node[pos=0.5,above]{\footnotesize{$CY(2,0,1)=0$}~~~~~~~~~~~~~};

    \node[vertex,label=left:{\footnotesize{$(t=0,h^{B}=0)$}}] (b0) at (0,0) {};
    \node[vertex,label=below:{\footnotesize{$(t=1,h^{B}=0)$}}] (b1) at (4,0) {};
    \node[vertex,label=left:{\footnotesize{$(t=1,h^{B}=1)$}}] (b2) at (4,2) {};
    \node[vertex,label=right:{\footnotesize{$(t=2,h^{B}=0)$}}] (b3) at (8,0) {};
    \node[vertex,label=right:{\footnotesize{$(t=2,h^{B}=1)$}}] (b4) at (8,2) {};

    \draw[line width= 1pt] (b0) -- (b1) node[pos=0.5,above]{\footnotesize{$B(1,0,0)=\frac{1}{2}$}};
    \draw[line width= 1pt] (b0) -- (b2) node[pos=0.5,above]{\footnotesize{$B(1,0,1)=\frac{1}{2}$}~~~~~~~~~~~~~~};
    \draw[line width= 1pt] (b1) -- (b3) node[pos=0.5,above]{\footnotesize{$B(2,0,0)=\frac{1}{2}$}};
    \draw[line width= 1pt] (b2) -- (b4) node[pos=0.5,above]{\footnotesize{$B(2,1,1)=\frac{1}{2}$}~~~};
    \draw[line width= 1pt, dashed] (b1) -- (b4) node[pos=0.5,above]{\footnotesize{$B(2,0,1)=0$}~~~~~~~~~~~};

%\draw (b1) to[bend left] (c2) node[pos=0.5]{$1/2$};
%\draw (b2) to[bend left]  (c1);
%\draw (b3) to[bend right]  (c3);
%\draw (b4) to[bend right]  (c4);

   \draw (b1) .. controls (1,2.5) .. (c2) node[pos=0.5,above]{\footnotesize{$DY(1,1,0)=\frac{1}{2}$}~~~~~~~~~~~~~~~~~~~~~~~~~};
   \draw (b2) .. controls (2.3,3.5) .. (c1) node[pos=0.5,above]{~~~~~~~~~~~~~~~~~~~~~~\footnotesize{$DY(1,0,1)=\frac{1}{2}$}};
   \draw (b3) .. controls (6,4) .. (c3) node[pos=0.4,above]{~~~~~~~~~~~~~~~~~\footnotesize{$DY(2,0,0)=\frac{1}{2}$}};
   \draw (b4) .. controls (9,3.5) .. (c4) node[pos=0.5,above]{~~~~~~~~~~~~~~~~~~~~\footnotesize{$DY(2,1,1)=\frac{1}{2}$}};
    \end{tikzpicture}
  \end{center}
  \caption{Example of non-integer point.}
\end{figure}

%\begin{center}
%	%\includegraphics[scale=.2]{}
%	\includegraphics[scale=.28 ,trim={0 3cm 0 7.5cm},clip]{IMG_20170126_165930.jpg}
%\end{center}
One can check that the example is a feasible solution (a point in the polytope).
Indeed, the flow conditions are satisfied,
as well as the equations linking the dummy variables $DY$ and the $CY$'s and $B$'s (Equations~\eqref{eq.3} and~\eqref{eq.5a}).

To argue that the point $P$ is indeed a vertex of the polytope, we show that for every line with non-zero direction vector $v=(x_0,\ldots,x_{14})$ and for every $\varepsilon>0$, either $P+\varepsilon v$ or $P-\varepsilon v$ is outside the polytope.
Every coordinate $x_i$ of $v$ corresponds, uniquely, to a variable $B(\cdot)$, $CY(\cdot)$, or $DY(\cdot)$.

First observe that if $x_i$ is the coordinate related to a variable that is either $0$ or $1$ in $P$, then $x_i=0$, as otherwise, for any $\varepsilon > 0$, either $P+\varepsilon v$ or $P-\varepsilon v$ would be outside of the polytope.
Hence, the only $x_i$ that may be non-zero, are those for which the coordinate $i$ in $P$ is in the open interval $(0,1)$.

In our example, every equation involves at most $2$ variables on each side of the equality, one of them being either $0$ or $1$. Hence the implications written below are forced by the previous observation.
Assume, for instance, that the coefficient $x_i$ corresponding to $B(2,1,1)$ in $v$ is negative.
\begin{itemize}
  \item Then, by the flow constraints (Equation~\eqref{eq.4}), the coefficient of $B(1,0,1)$ is negative.
  \item Then, by the flow constraints, the coefficient of $B(1,0,0)$ is positive.
  \item Then, by the flow constraints, the coefficient of $B(2,0,0)$ is positive.
\end{itemize}

Now, using the equations that link the variables $B$ and $DY$, we obtain that the the coefficient of $DY(2,1,1)$ is positive, which implies that
\begin{itemize}
  \item the coefficient of $CY(2,1,1)$ in $v$ is positive;
  \item then, by the flow constraints, the coefficient of $CY(1,0,1)$ is positive;
  \item then, by the flow constraints, the coefficient of $CY(1,0,0)$ is negative;
  \item then, by the flow constraints, the coefficient of $CY(2,0,0)$ is negative.
\end{itemize}
Observe now that this implies that the coefficient of $DY(2,0,0)$ has to be negative. However, let us now look at the coefficients of $DY(1,0,1)$ and the one corresponding to $DY(1,1,0)$.

If we use the links between the variables $DY$ and $B$, the coefficients corresponding to the variables $DY(1,0,1)$ and $DY(1,1,0)$ in $v$ have to be negative and positive respectively.
However, if we look at the equations linking the variables $DY$ and $CY$, the coefficients should have the opposite sign.
Thus, these coefficients should be zero, implying that all the other coefficients have to be $0$, which shows that no non-zero vector $v$ exists.

The first coefficient involved in the argument was the one involving the variable $B(2,1,1)$.
Since the implications described here involve all the non-zero variables of the point, and the implications are reversible, the result now follows.

\subsection{Sufficient conditions for integrality}
\label{subsection:avoidingnonintegralpoints}
%open problem by Stan: find classes of valid inequalities (see similar idea done with the provided cutting planes in the thesis of Stan's student)
In light of the above result, we now present some sufficient conditions on the objective function~\eqref{eq.of_1}--\eqref{eq.of_3}, that guarantee that either the linear relaxation of the integer program finds an integral point as a solution, or that there is an integral point in the optimal face and a procedure to find it.

\begin{proposition}
\label{p.1}
  Consider the IP model from Sect.~\ref{sec:modeldesc}.
  Assume that,
  \begin{enumerate}[(i)]
    \item \label{cond.1} for every $t,d,h_2,h_2',h_2^{B},h_2'^{B}$ such that $h_2\leq h_2'$ and $h_2^{B}\leq h_2'^{B}$, 		
	  \begin{equation}
      \label{e.1}
	    D_{\textnormal{expdam}}(t,d,h_{2}',h_{2}^{B}) + D_{\textnormal{expdam}}(t,d,h_{2},h_{2}'^{B})
	    \geq
	    D_{\textnormal{expdam}}(t,d,h_{2},h_{2}^{B}) + D_{\textnormal{expdam}}(t,d,h_{2}',h_{2}'^{B})
%	
%	
%	D_{\text{expdam}}(t,i,h_{2,\sigma(j)},h_{2,j}^{B})
%	\sum_{j=1}^k D_{\text{expdam}}(t,i,h_{2,\sigma(j)},h_{2,j}^{B})>
%	\sum_{j=1}^k D_{\text{expdam}}(t,i,h_{2,j},h_{2,j}^{B})\;.
      \end{equation}
    \item \label{cond.2} for every $t,h_1^B,h_1'^B,h_2^B,h_2'^B$ such that $h_1^B\leq h_1'^B$ and $h_2^B\leq h_2'^B$,
	  \begin{equation}
      \label{eq.5}
	    B_{\textnormal{cost}}(t,h_1^B,h_2'^B)+B_{\textnormal{cost}}(t,h_1'^B,h_2^B)\geq B_{\textnormal{cost}}(t,h_1^B,h_2^B)+B_{\textnormal{cost}}(t,h_1'^B,h_2'^B)
	  \end{equation}
	\item \label{cond.3} for every $t,d,h_1,h_1',h_2,h_2'$ such that $h_1\leq h_1'$ and $h_2\leq h_2'$,
	  \begin{equation}\label{eq.6}
	    D_{\textnormal{cost}}(t,d,h_1,h_2')+D_{\textnormal{cost}}(t,d,h_1',h_2)\geq D_{\textnormal{cost}}(t,d,h_1,h_2)+D_{\textnormal{cost}}(t,d,h_1',h_2')\;.
	  \end{equation}
  \end{enumerate}
  Then, there is an optimal solution of the linear relaxation of the IP model in Sect.~\ref{sec:modeldesc} with integer coordinates.
\end{proposition}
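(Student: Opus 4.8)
The plan is to start from an arbitrary optimal solution of the linear relaxation and transform it, without ever increasing the objective, into a convex combination of integral feasible solutions; the cheapest of these is then an integral optimum. Everything is driven by the observation that \eqref{e.1}, \eqref{eq.5} and \eqref{eq.6} are exactly the Monge (submodularity) inequalities for the three cost matrices involved. I would first record the combinatorial meaning of the constraints: by \eqref{eq.ini} and \eqref{eq.222}, for each segment $d$ the variables $CY(t,d,\cdot,\cdot)$ form a unit flow in the acyclic graph on vertices $(t,h)$ with arcs $(t-1,h_1)\to(t,h_2)$ for $h_1\le h_2$, i.e.\ a distribution over non-decreasing height paths; likewise \eqref{eq.ini2} and \eqref{eq.4} make $B$ a distribution over barrier paths. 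The linking constraints \eqref{eq.3} and \eqref{eq.5a} say that at each time $t$ the array $DY(t,d,\cdot,\cdot)$ is a coupling whose marginals are the segment-$d$ height distribution $\nu^d_t$ (read off from $CY$) and the common barrier height distribution $\nu^B_t$ (read off from $B$). I would fix these marginals $\nu^d_t,\nu^B_t$ from the given optimum and keep them frozen throughout.

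The core reduction then proceeds term by term. Since the marginals are frozen, the transition couplings at distinct times become mutually independent, so the dike investment cost \eqref{eq.of_1} for segment $d$ decomposes over $t$ into independent transportation problems between $\nu^d_{t-1}$ and $\nu^d_t$; by the Monge inequality \eqref{eq.6} each is minimized by the comonotone (quantile) coupling, and because the original flow uses only arcs with $h_1\le h_2$ we have $\nu^d_{t-1}$ stochastically below $\nu^d_t$, so this comonotone coupling automatically respects the arc constraint $h_1\le h_2$. Replacing the flow by the resulting quantile flow therefore does not increase \eqref{eq.of_1}, preserves all marginals, and hence leaves \eqref{eq.of_2} and \eqref{eq.of_3} unchanged. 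The same argument using \eqref{eq.5} replaces $B$ by the barrier quantile flow without increasing \eqref{eq.of_3}. Finally, with all marginals fixed, I replace each $DY(t,d,\cdot,\cdot)$ by the comonotone coupling of $\nu^d_t$ and $\nu^B_t$; by \eqref{e.1} this does not increase the expected-damage term \eqref{eq.of_2}. The net effect is a feasible solution of cost no larger than the optimum.

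The last step is to recognize the transformed solution as a convex combination of integral solutions. A quantile flow is described by a single parameter $u\in[0,1]$: the barrier follows the path $t\mapsto F_{\nu^B_t}^{-1}(u)$ and segment $d$ follows $t\mapsto F_{\nu^d_t}^{-1}(u)$, while the comonotone couplings match equal quantiles, so one and the same $u$ simultaneously selects the barrier path and every dike path. Thus the whole solution is the law of an integral feasible solution $S(u)$ with $u$ uniform on $[0,1]$; partitioning $[0,1]$ into the finitely many intervals on which $S(u)$ is constant expresses it as $\sum_k\lambda_k S_k$ with $\lambda_k\ge 0$, $\sum_k\lambda_k=1$, and each $S_k$ an integral feasible point. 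Hence $\min_k \mathrm{cost}(S_k)\le \sum_k\lambda_k\,\mathrm{cost}(S_k)\le$ the LP optimum, and that minimizing $S_k$ is the desired integral optimal solution.

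I expect the main obstacle to be the synchronization in the last paragraph: because the single barrier is shared by all segments through \eqref{eq.5a}, one must check that comonotonically coupling each segment to the common barrier really forces a single global quantile $u$ to drive all paths at once, rather than merely fixing the correct marginals per segment. Verifying that the per-time comonotone couplings are mutually consistent across time and across segments, so that $S(u)$ is a well-defined integral feasible solution for every $u$, is the delicate point. A secondary technical check is that freezing the marginals genuinely decouples the transition couplings across time, which is what licenses the term-by-term optimization by \eqref{e.1}, \eqref{eq.5} and \eqref{eq.6}.
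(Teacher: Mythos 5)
Your proof is correct, and its engine is the same as the paper's: conditions \eqref{eq.6}, \eqref{eq.5} and \eqref{e.1} are used as Monge inequalities to uncross, at fixed marginals, the transition couplings $CY(t,d,\cdot,\cdot)$, $B(t,\cdot,\cdot)$ and the linking couplings $DY(t,d,\cdot,\cdot)$ into their comonotone versions without increasing the objective --- this is exactly what the paper calls ``layering the flow paths'' and ``untangling the $DY$ variables''. Where you genuinely diverge is in the concluding step. The paper argues that the uncrossed point is not a vertex: it pairs the extremal paths $U_d\leftrightarrow U_B$ and $L_d\leftrightarrow L_B$, redirects an $\varepsilon$ of flow simultaneously along all these pairs to drive one fractional arc to $0$ or $1$, and iterates until an integral point on the optimal face remains. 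You instead observe that the comonotone structure is globally parametrized by a single quantile $u\in[0,1]$ --- since every comonotone coupling of $\nu^d_t$, $\nu^B_t$ (and of $\nu^d_{t-1}$, $\nu^d_t$, etc.) is the law of the pair of generalized inverses evaluated at the same uniform $u$ --- so the whole solution is at once a convex combination of the integral feasible points $S(u)$, and the synchronization issue you flag resolves itself: one $u$ consistently drives the barrier path, every dike path, and every $DY$ entry, with feasibility of each $S(u)$ following from the stochastic monotonicity $\nu^d_{t-1}\preceq\nu^d_t$ that the support condition $h_1\le h_2$ forces. Your route buys a one-shot, non-iterative decomposition that makes the ``$x_1$ is not a vertex'' and flow-redirection bookkeeping of the paper unnecessary, and it makes explicit (via the common uniform variable) the consistency of the pairings that the paper asserts somewhat tersely; the paper's route, in exchange, is the one that adapts directly to Corollary~\ref{coro1}, where some dikes must be anti-comonotone with the barrier and the pairing is flipped to $L_d\leftrightarrow U_B$ --- in your language that dike would be driven by $1-u$ instead of $u$, which still works but is worth saying.
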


Note that the term $\left(B_{\text{cost}}(t,h_1^B,h_2^B)+B_{\text{expdam}}(t,h_2^B)\right)$ from Equation~\eqref{eq.of_3} does not appear in condition~\eqref{cond.2} as it appears in both sides of the inequality.
\begin{proof}[Proof of Proposition~\ref{p.1}]
  The problem from Sect.~\ref{sec:modeldesc} can be thought of as several intertwined min-cost flow problems (see Sect.~\ref{sec:abstraction}), one for each dike, and one for the barrier.
  We say that a path in a graph $v_1 e_1 v_2 e_2 \ldots v_n$ with vertices $v_1 v_2 \ldots v_n$ and edges $e_1 e_2 \ldots e_{n-1}$, is a \emph{flow path} when the flow through each edge is the same.
  In our case, the vertices of the graph represent heights.
	
  Let $x_0$ be a solution point given by the linear relaxation, and assume it is non-integral.
  Using the monotone relations \eqref{eq.5} and \eqref{eq.6}, the paths of the non-zero flows that $x_0$ defines for each of the dikes and the barrier can be assumed to be completely ordered (as otherwise, the flow values on the edges might be modified while maintaining the value of the in-flow and out-flow at each vertex while not increasing the objective function).
  That is to say, we obtain a layered flow: a flow path $v_1 e_1 v_2 e_2 \ldots v_n$ with height profile $v_1 v_2 \ldots v_n$ is above a flow path $w_1 e'_1 w_2 e'_2 \ldots w_n$ with height profile $w_1 w_2 \ldots w_n$ when $v_i\geq w_i$ for all $i$ (i.e., no two flow-paths strictly cross between two layers of vertices corresponding to two different consecutive times).
  In particular, for each of the dikes $d$, we can talk about a top path $U_d$ (the height profile being always larger or equal than all the other height profiles), and a bottom path $L_d$, whose heights are smaller or equal than all the other height profiles.
  There is also a top $U_B$ and bottom $L_B$ paths for the flow of the barrier.
	%	 Then we use the objective function to show that there is non-vertex point whose value is striclty less than the non-integer vertex of the polytope.
	%	
	
  Observe that, as $x_0$ is non-integral, at least one of the variables $DY$ is non-integral (either not equal to zero or not equal to one).

%	 Let $DY_\text{min}$ be the minimal distance from any non-integer $DY$ to either $0$ or $1$.

%	
%	 For every dyke $d$ with some non-integer variable $DY$, there are two
%	 different paths $P_{d,l}$ and $P_{d,u}$ from the ficticious source to the ficticious sink associated to the dyke $d$, that have a non-integral flow. These two paths may share some vertices and some edges (so each path bears a portion of the flow through the edge), but they have, at least, one edge where they differ. Modifying the paths $P_{d,l}$ and $P_{d,u}$, we can assume that the heights of $P_{d,u}$ are the
%	 largest among all possible height profiles, and those in
%	 $P_{d,l}$ are the smallest.
%	
%
%	
%	
%	 If the barrier has some
%	 non-integer path from the ficticious source to the ficticious sink,
%	 then two different paths also exist $P_{B,l}$ and $P_{B,u}$ being the lowest and the upper-most path.

%These top and bottom paths (for both the dykes and the barrier) are well defined respectively in the following sense: the paths might be crossing\footnote{The monotone relations \eqref{eq.ini3} and \eqref{eq.ini5} do not suffice to assume a layered flow.}. However, using \eqref{eq.5} (and \eqref{eq.6} for the barrier), we can assume that there is a layered flow that is non-crossing, through the edges, and that minimizes the objective function (among all possible flows compatible with the assignment of inflow-outflow on the vertices).

	  %Using \eqref{eq.6}, these paths for the flows are well defined.
	
  Let $DY_{\text{min}}$ be the minimal distance of the non-integral variables to either $0$ or $1$.
  Using~\eqref{e.1} as a guideline repeatedly, we modify the variables $DY$ from $x_0$ to create a new feasible solution~$x_1$ in which the variables $DY(t,i,h_2,h_2^B)$ are ``untangled''.
  That is: given $t,i,h_2,h_2',h_2^B,h_2'^B$ such that
  \[
    \begin{cases}
      h_2\leq h_2', h_2^B\leq h_2'^B \\
      1-DY_{\min}\geq DY(t,i,h_2,h_2^B)\geq DY_{\min}\\
      1-DY_{\min}\geq DY(t,i,h_2',h_2^B)\geq DY_{\min}\\
      1-DY_{\min}\geq DY(t,i,h_2,h_2'^B)\geq DY_{\min}\\
      1-DY_{\min}\geq DY(t,i,h_2',h_2'^B)\geq DY_{\min}
   \end{cases}
  \]
  then, by modifying
  \begin{equation}
  \label{eq:reassigningflow}
    \begin{cases}
      DY(t,i,h_2',h_2^B)\to DY(t,i,h_2',h_2^B)-DY_{\min} \\
      DY(t,i,h_2,h_2'^B)\to DY(t,i,h_2,h_2'^B)-DY_{\min}\\ %\label{eq:reassigningflow2}
      DY(t,i,h_2,h_2^B)\to DY(t,i,h_2,h_2^B)+DY_{\min}\\ %\label{eq:reassigningflow3}
      DY(t,i,h_2',h_2'^B)\to DY(t,i,h_2',h_2'^B)+DY_{\min} %\label{eq:reassigningflow4}
    \end{cases}
  \end{equation}
  and keeping the other values of solution $x_0$, we obtain a new feasible solution $x_1$ as good as $x_0$.
  In particular, by repeated application of the argument leading to \eqref{eq:reassigningflow}, we can assume that
  \begin{equation*}
    DY_{x_1}(t,i,h_2(U_i),h_2^{B}(U_B)) = \min\left\{\sum_{h_2} DY_{x_0}(t,i,h_2,h_2^{B}(U_{B})),\sum_{h_2^B} DY_{x_0}(t,i,h_2(U_{i}),h_2^{B})\right\}
  \end{equation*}
  and that
  \begin{equation*}
    DY_{x_1}(t,i,h_2(L_i),h_2^{B}(L_{B})) = \min\left\{\sum_{h_2} DY_{x_0}(t,i,h_2,h_2^{B}(L_{B})),\sum_{h_2^B} DY_{x_0}(t,i,h_2(L_i),h_2^{B})\right\},
  \end{equation*}
%by reassigning some mass of the variables $DY$ that are crossed, as described in \eqref{eq:reassigningflow}.
  while the remaining variables of $x_0$ are kept equal in $x_1$.
  As the reassignment preserves the flow constraints, $x_1$ remains feasible. By~\eqref{e.1}, $x_1$ has the same objective value as $x_0$, since $x_0$ is optimal.

  Let $F_{\min}$ be the minimal difference to $0$ or $1$ of the flow through each $L_{d}, U_{d}$ for every dike~$d$ and $L_{B}$ or $U_{B}$, which can be assumed to be the minimal value of
  \[
    \min_{t,i}\left\{DY_{x_1}(t,i,h_2(U_{i}),h_2^{B}(U_B)), DY_{x_1}(t,i,h_2(L_{i}),h_2^{B}(L_{B}))\right\} \enspace.
  \]
  As we shall see, $F_{\min}$ is the minimal amount of flow which is reassigned between the upper and lower paths.
%
%	
%	
%	
%	 as a guideline, we create $x_1$ by
%	modifying the values of the variables $DY(\cdot)$ in $x_0$ such that $DY(t,i,h_{2}(P_{i,l}),h_{2}^{B}(P_{B,l}))$ and
%	$DY(t,i,h_{2}(P_{i,u}),h_{2}^{B}(P_{B,u}))$
%belong to $[\min{F_{\min},DY_{\min}},1-\min{F_{\min},DY_{\min}}]$ with
%%estic aqui
%	\begin{align}\label{eq.4}
%	%\begin{cases}
%	\begin{cases}
%	DY(t,i,h_{2}(P_{i,u}),h_{2}^{B}(P_{B,l})) \to DY(t,i,h_{2}(P_{i,u}),h_{2}^{B}(P_{B,l}))-\min\{F_{\min},
%	DY_{\min}\} \\
%	DY(t,i,h_{2}(P_{i,l}),h_{2}^{B}(P_{B,l})) \to DY(t,i,h_{2}(P_{i,l}),h_{2}^{B}(P_{B,l}))+ \min\{F_{\min},
%	DY_{\min}\}
%	\end{cases}\\\label{eq.55}
%	\begin{cases}
%	DY(t,i,h_{2}(P_{i,l}),h_{2}^{B}(P_{B,u})) \to DY(t,i,h_{2}(P_{i,l}),h_{2}^{B}(P_{B,u}))-\min\{F_{\min},
%	DY_{\min}\} \\
%	DY(t,i,h_{2}(P_{i,u}),h_{2}^{B}(P_{B,u})) \to DY(t,i,h_{2}(P_{i,u}),h_{2}^{B}(P_{B,u}))+ \min\{F_{\min},
%	DY_{\min}\}
%	\end{cases}
%%	
%%	
%%	& \text{ if $\sigma(j)\neq j$} \\
%%	DY(t,i,h_{2,\sigma(j)},h_{2,j}^{B}) \to DY(t,i,h_{2,j},h_{2,j}^{B}) & \text{ if $\sigma(j)\neq j$}
%%	\end{cases}
%%	
%\end{align}

%	In $x_1$ all the variables
%	$DY(t,i,h_{2,j},h_{2,j}^{B})$ are, at least $DY_{\text{min}}$.
%	
%	By \eqref{e.1}, the value of $\eqref{eq.of_1}+\eqref{eq.of_2}+\eqref{eq.of_3}$
%	at the modified $x_1$ is (strictly) smaller than for $x_0$.
%	Moreover, as the operations in \eqref{eq.4} do not affect the flow conditions, $x_1$ is a factible point.
%	
	
  We note that $x_1$ is not a vertex of the polytope.
  Indeed, for any dike $d$, we can pair up $L_d\leftrightarrow L_B$ and $U_d\leftrightarrow U_{B}$.
  Using \eqref{eq.5} and \eqref{eq.6}, this pairing is well defined and consistent.
  In particular, we can redirect an $\varepsilon$ amount of flow---where $0<\varepsilon \leq F_{\min}$---from each of the $L_d$ to~$U_d$ and from $L_B$ to $U_B$, or vice versa (the redirection of the flow should be done on each of the paths simultaneously, either from upper to lower paths, or from lower to upper ones).
  Since there exists a $d$ (or $B$) for which the paths $L_d$ and $U_d$ differ, this flow-redirection by $\varepsilon$ gives a different point on the polytope of feasible points and shows that $x_1$ is not a vertex of the polytope.

  Furthermore, for every $\varepsilon>0$, the mentioned flow redirection should give the same value of the objective function (since otherwise $x_0$ would not have been an optimal solution).
  Hence we can choose to redirect the flow at our convenience; we redirect it so that the edge whose flow value is $F_{\min}$ becomes either $0$ or $1$ (depending on whether its value is closer to $0$ or to~$1$, if $F_{\min}=1/2$, we arbitrarily redirect the flow either way).
  In particular, we obtain a new solution~$x_2$ where the number of edges with non-integral flow has been reduced by at least one.
  This procedure can be iterated until no non-integral flows are found.
  Therefore, an integral vertex of the polytope in the optimal face of the linear relaxation of the integer program is found.
%
%Then, for every $d$, both the upper and the lower paths $P_{d,l}$, $P_{d,u}$, should induce the same cost on the objective function (since otherwise, we can strictly decrease the value of the objective function, hence contradicting the optimality of $x_0$). Hence, we can redirect the flow from the path holding the minimal flow over an edge, to the other path. This can be done for every pair of distinct paths involved with a single dyke (and for the barrier path as well). The variables $DY$ should be adjusted accordingly using the flow equations and restrictions of the linear relaxation.
%Therefore, we obtain a new feasible solution $x_2$ for which the number of edges with a non-zero/non-one flow is reduced by $1$. The value of the objective function should be the same.
%
%The procedure is now iterated with new top and bottom paths for every dyke and barrier. As the number of edges with non-integral flow is reduced at by at least $1$ in each iteration, the process finishes with a single path in each dyke and barrier, hence with an integer solution.
%This integer solution has the same value as the original one (as otherwise we get a contradiction with $x_0$ being the optimum), thus an integral solution is found with the same objective funtion value, showing the claimed statement.
\end{proof}

\begin{corollary}
\label{coro1}
  The conclusion of Proposition~\ref{p.1} also holds if we assume conditions~\eqref{cond.2} and~\eqref{cond.3}, and condition~\eqref{cond.1} on the objective function is replaced by
  \begin{enumerate}[(i')]
    \item \label{cond.1p} For each dike $d \in D$, either
	  \begin{equation}
        \label{eq.2}
	    D_{\textnormal{expdam}}(t,d,h_{2}',h_{2}^{B}) + D_{\textnormal{expdam}}(t,d,h_{2},h_{2}'^{B})
	    \leq 	
	    D_{\textnormal{expdam}}(t,d,h_{2},h_{2}^{B}) + D_{\textnormal{expdam}}(t,d,h_{2}',h_{2}'^{B})
%	
%	
%	D_{\text{expdam}}(t,i,h_{2,\sigma(j)},h_{2,j}^{B})
%	\sum_{j=1}^k D_{\text{expdam}}(t,i,h_{2,\sigma(j)},h_{2,j}^{B})>
%	\sum_{j=1}^k D_{\text{expdam}}(t,i,h_{2,j},h_{2,j}^{B})\;.
	  \end{equation}
     for every $t,h_2,h_2',h_2^{B},h_2'^{B}$ such that $h_2\leq h_2'$ and $h_2^{B}\leq h_2'^{B}$, or
     \begin{equation}
     \label{e.11}
       D_{\textnormal{expdam}}(t,d,h_{2}',h_{2}^{B}) + D_{\textnormal{expdam}}(t,d,h_{2},h_{2}'^{B})
	   \geq 	
       D_{\textnormal{expdam}}(t,d,h_{2},h_{2}^{B}) + D_{\textnormal{expdam}}(t,d,h_{2}',h_{2}'^{B})
%	
%	
%	D_{\text{expdam}}(t,i,h_{2,\sigma(j)},h_{2,j}^{B})
%	\sum_{j=1}^k D_{\text{expdam}}(t,i,h_{2,\sigma(j)},h_{2,j}^{B})>
%	\sum_{j=1}^k D_{\text{expdam}}(t,i,h_{2,j},h_{2,j}^{B})\;.
	\end{equation}
    for every $t,h_2,h_2',h_2^{B},h_2'^{B}$ such that $h_2\leq h_2'$ and $h_2^{B}\leq h_2'^{B}$.
  \end{enumerate}
\end{corollary}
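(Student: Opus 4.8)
The plan is to reuse the proof of Proposition~\ref{p.1} almost verbatim, isolating the single place where condition~\eqref{cond.1} was actually invoked and showing that the weaker hypothesis~\eqref{cond.1p} suffices there. Conditions~\eqref{cond.2} and~\eqref{cond.3} are retained unchanged, so the first part of the argument goes through word for word: starting from a non-integral optimal solution $x_0$ of the relaxation, the monotone relations~\eqref{eq.5} and~\eqref{eq.6} let me assume the dike and barrier flows are layered, giving for each dike~$d$ a top path $U_d$ and a bottom path $L_d$, and for the barrier a top path $U_B$ and a bottom path $L_B$. Condition~\eqref{cond.1} entered only through the untangling step~\eqref{eq:reassigningflow}, which pushes the $DY$-flow of each dike onto the \emph{diagonal} (dike-low paired with barrier-low, dike-high with barrier-high), and through the resulting pairing $U_d\leftrightarrow U_B$, $L_d\leftrightarrow L_B$ used for the final $\varepsilon$-redirection. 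Everything else is independent of~\eqref{cond.1}.

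Next I would exploit that the linking constraint~\eqref{eq.5a} couples each dike to the barrier \emph{separately}, so the untangling direction can be chosen per dike. Partition $D=D^{+}\cup D^{-}$, where $D^{+}$ consists of the dikes for which~\eqref{e.11} holds (the supermodular case, identical to Proposition~\ref{p.1}) and $D^{-}$ of those for which~\eqref{eq.2} holds (the submodular case); a dike satisfying both may be placed in either class. For $d\in D^{+}$ I untangle exactly as in~\eqref{eq:reassigningflow}, pushing the $DY$-flow onto the diagonal, which by~\eqref{e.11} does not increase the objective, and I pair $U_d\leftrightarrow U_B$, $L_d\leftrightarrow L_B$. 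For $d\in D^{-}$ I run~\eqref{eq:reassigningflow} in the \emph{opposite} direction, concentrating the $DY$-flow on the anti-diagonal (dike-high paired with barrier-low, dike-low with barrier-high); by~\eqref{eq.2} this reassignment again does not increase the objective, and here I pair $U_d\leftrightarrow L_B$, $L_d\leftrightarrow U_B$.

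I would then define the simultaneous redirection: move an amount $\varepsilon\in(0,F_{\min}]$ of barrier flow between $L_B$ and $U_B$, and correspondingly move $\varepsilon$ of flow on each $d\in D^{+}$ between $L_d$ and $U_d$ in the same sense, and on each $d\in D^{-}$ between $L_d$ and $U_d$ in the opposite sense. Because the pairing was fixed per dike and the barrier change is a single shift $L_B\leftrightarrow U_B$, each dike's $DY$-adjustment stays consistent with its own $CY$-flow (constraint~\eqref{eq.3}) and with the shared barrier flow (constraint~\eqref{eq.5a}) simultaneously, so the new point is feasible. As in Proposition~\ref{p.1}, optimality of $x_0$ forces the linear objective to be constant along this direction, so I can push $\varepsilon$ up to $F_{\min}$ to drive the minimizing edge to $0$ or $1$, strictly reducing the number of non-integral edges, and iterate until an integral optimal vertex is reached.

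The hard part will be verifying that the \emph{single} barrier redirection $L_B\leftrightarrow U_B$ remains consistent with the mixed pairings --- increasing flow on $U_B$ while some dikes send flow from $L_d$ to $U_d$ and others from $U_d$ to $L_d$. The key observation to nail down is that the coupling~\eqref{eq.5a} is one equation per $(t,d,h_2^B)$, so each dike balances against the barrier on its own; the anti-aligned dikes in $D^{-}$ simply feed the increased $U_B$-flow from their $L_d$-side rather than their $U_d$-side, and no constraint ties two different dikes directly to one another. Once this per-dike consistency is checked, the objective-invariance and iteration arguments are identical to those of Proposition~\ref{p.1}.
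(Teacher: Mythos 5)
Your proposal is correct and follows essentially the same route as the paper's own proof: partitioning the dikes according to which direction of the inequality holds, reversing the untangling step~\eqref{eq:reassigningflow} and swapping the pairing to $L_d\leftrightarrow U_B$, $U_d\leftrightarrow L_B$ for the submodular dikes, and then running the same flow-redirection and iteration argument. Your explicit check that constraint~\eqref{eq.5a} couples each dike to the barrier separately (so mixed pairings remain feasible under a single barrier shift) is exactly the point the paper dispatches with ``mutatis mutandis,'' so you have merely spelled out the same argument in more detail.
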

\begin{proof}
  The argument of the proof of Proposition~\ref{p.1} should be modified as follows.
  Observe that the layering of the flow-paths can be maintained due to conditions~\eqref{cond.2} and~\eqref{cond.3}.
  Then, the flow path pairing that allows for the flow reassignment of the second part of the proof can be modified as follows.
  The dike $d$ uses the pairing
  \[
    L_d\leftrightarrow L_B \text{ and } U_d\leftrightarrow U_B
  \]
  if part~\eqref{e.11} of condition~\eqref{cond.1p} is satisfied, and it is exchanged by the new pairing
  \[
    L_d\leftrightarrow U_B \text{ and } U_d\leftrightarrow L_B \; .
  \]
  if part~\eqref{eq.2} is satisfied.
  In each of the cases we use either the modification on the $DY$ due to~\eqref{eq:reassigningflow} or due to~\eqref{eq:reassigningflow} where the $-DY_{min}$ and $+DY_{min}$ are exchanged.

  If the solution is non-integral, there exists a flow reassignment  between the paths paired with~$U_B$, and the paths paired with $L_B$.
  Mutatis mutandis, the reassigning flow argument carries over to this new case.
\end{proof}

To provide some intuition, observe that all inequalities appearing in conditions~\eqref{cond.1}-~\eqref{cond.3} are of a similar form: for $a \leq a'$ and $b \leq b'$ we have that some function $c(\cdot,\cdot)$ satisfies $c(a,b) + c(a',b') \leq c(a,b') + c(a',b)$.
Such an inequality for $c$ is naturally satisfied (in fact with equality) if $c$ is of the form $c(x,y) = f(y) - f(x) + c_0$ for some function $f$ and constant $c_0$.
In the context of conditions~\eqref{cond.2} and~\eqref{cond.3} from Proposition~\ref{p.1}, such a form is somewhat reasonable to expect: the cost of rising a dike from level $x$ to $y$ compares to the cost of rising the dike from level $0$ to $y$, minus the effort already made to rise it from $0$ to $x$, plus perhaps some inefficiency overhead $c_0$.

\subsection{Computational results}
Conditions~(i'),~\eqref{cond.2} and~\eqref{cond.3} from Corollary~\ref{coro1} and Proposition~\ref{p.1} have been implemented and
tested for the most recent data on flood probabilities, damage and investment costs, and the results confirm that they are often met.

%In particular, conditions (i'),~\eqref{cond.2} and~\eqref{cond.3} are satisfied in $90\%$, $89\%$ and $84\%$ of the instances checked, respectively.

In the first column of the following tables we specify the years that we used in our study: 5 year periods until 2100 and 10 year periods after 2100.
In the first row we specify the specific dike rings.
The description of the dike rings around Lake IJssel and the IJsseldelta is as follows (the numbers are also used in Fig.~\ref{fig:dikerings}):\\

\begin{tabular}{lll}
  zwf = Zuid-West Friesland = 6.4 &\qquad\qquad& nop =   Noord-Oost Polder = 7.1\\
  nfl = Noord-Oost Flevoland = 8.1 &\qquad\qquad& wfn =  West-Friesland Noord = 13.2\\
  wie = Wieringen = 12.1 &\qquad\qquad& ijd =  IJsseldelta = 11.1\\
  mas = Mastenbroek = 10.1 &\qquad\qquad& vol =  Vollenhove = 9.1\\
  sal = Salland = 53.1 &\qquad\qquad& ovl =  Oost-Veluwe = 52.1
\end{tabular}

\begin{figure}[ht!]
\centering
  \includegraphics[scale=0.5]{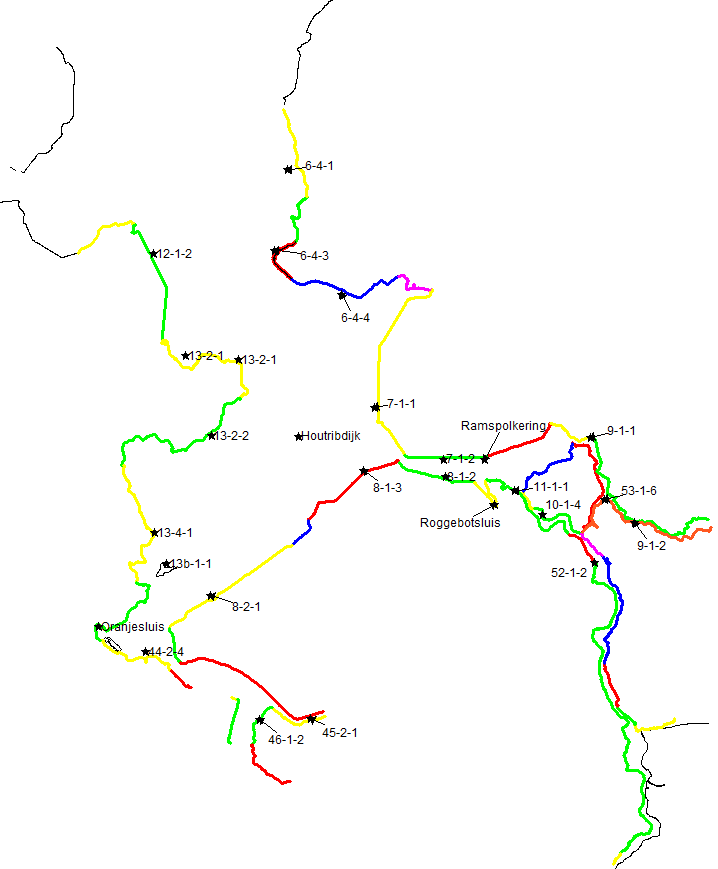}
    \caption{Dike rings around Lake IJssel and the IJsseldelta in The Netherlands.}
  \label{fig:dikerings}
\end{figure}

%In our studies we also included dike rings around Lake Marken, but I didn�t use them in the calculations for the SWI because your specification also didn�t .
 As safety levels we included 14 levels for the dike rings and also 14 levels for the Afsluitdijk. As a result, for each dike ring  $\frac{1}{2}\cdot14\cdot15=105$ combinations of safety levels for both the dike rings and the Afsluitdijk could be evaluated. Hence, in total we tested $105\cdot 105=11025$ instances (numbers in the second column). The rest of numbers in the following tables correspond to the number of instances that fulfill the corresponding condition and are calculated for all dike rings (first numerical row) and year (first column) combinations. Note that the year 2020 is the initial year considered, thus the conditions of Corollary~\ref{coro1} and Proposition~\ref{p.1} are trivially satisfied for each dike ring (zwf, nop, nfl, wfn,  wie, ijd, mas, vol, sal, ovl) and barrier (Afsluitdijk); the ``2020'' row is hence added for comparison purposes.

We ran the simulation with pumps on the Afsluitdijk and without additional strengthening measures of the Afsluitdijk (STA-program).

\vspace{1cm}

\tiny{
\begin{center}
\begin{tabular}{ |c|c|c|c|c|c|c|c|c|c|c| }
 \hline
 \multicolumn{11}{|c|}{Condition~(i') from Corollary~\ref{coro1}} \\
 \hline
year & zwf & nop & nfl & wfn & wie & ijd & mas & vol & sal & ovl \\ \hline
2020 & 11025 & 11025 & 11025 & 11025 & 11025 & 11025 & 11025 & 11025 & 11025 & 11025\\
2021 & 10930 & 11025 & 9484 & 10935 & 11025 & 10604 & 10105 & 10484 & 10634 & 10869\\
2026 & 10815 & 11022 & 9494 & 10909 & 11025 & 10602 & 10117 & 10488 & 10639 & 10869\\
2031 & 10750 & 11023 & 9510 & 10874 & 11025 & 10600 & 10125 & 10491 & 10645 & 10856\\
2036 & 10936 & 11025 & 9530 & 10857 & 11025 & 10598 & 10130 & 10496 & 10656 & 11024\\
2041 & 10970 & 11025 & 9553 & 10832 & 11025 & 10596 & 10138 & 10499 & 10670 & 10996\\
2046 & 10992 & 11025 & 9588 & 10821 & 11025 & 10594 & 10148 & 10503 & 10683 & 10996\\
2051 & 11000 & 11025 & 9618 & 10811 & 11025 & 10594 & 10148 & 10506 & 10683 & 10963\\
2056 & 11010 & 11025 & 9666 & 10807 & 11025 & 10594 & 10148 & 10511 & 10683 & 10875\\
2061 & 11017 & 11025 & 9674 & 10807 & 11025 & 10594 & 10156 & 10515 & 10683 & 10809\\
2066 & 11021 & 11025 & 9677 & 10807 & 11025 & 10594 & 10170 & 10517 & 10683 & 10736\\
2071 & 11024 & 11025 & 9698 & 10803 & 11025 & 10594 & 10184 & 10520 & 10694 & 10676\\
2076 & 11025 & 11025 & 9715 & 10799 & 11025 & 10594 & 10194 & 10526 & 10708 & 10619\\
2081 & 11025 & 11025 & 9762 & 10803 & 11025 & 10594 & 10198 & 10530 & 10709 & 10566\\
2086 & 11025 & 11025 & 9777 & 10803 & 11025 & 10600 & 10205 & 10533 & 10713 & 10510\\
2091 & 11025 & 11025 & 9812 & 10803 & 11025 & 10600 & 10222 & 10537 & 10721 & 10487\\
2096 & 11025 & 11025 & 9847 & 10807 & 11025 & 10600 & 10239 & 10542 & 10736 & 10383\\
2101 & 11025 & 11025 & 9895 & 10807 & 11025 & 10600 & 10266 & 10542 & 10751 & 8185\\
2111 & 11025 & 11025 & 9927 & 10809 & 11025 & 10600 & 10297 & 10531 & 10763 & 8051\\
2121 & 11025 & 11025 & 9993 & 10894 & 11025 & 10601 & 10297 & 10506 & 10771 & 7986\\
2131 & 11025 & 11025 & 10020 & 10954 & 10934 & 10606 & 10297 & 10498 & 10771 & 7937\\
2141 & 10975 & 11025 & 10098 & 10988 & 10752 & 10612 & 10304 & 10472 & 10776 & 7920\\
2151 & 10854 & 11025 & 10197 & 10987 & 10492 & 10633 & 10311 & 10470 & 10799 & 7876\\
2161 & 10636 & 11025 & 10236 & 10953 & 10206 & 10633 & 10316 & 10439 & 10825 & 7858\\
2171 & 10207 & 11025 & 10289 & 10912 & 9751 & 10654 & 10331 & 10388 & 10831 & 7842\\
2181 & 9608 & 11025 & 10365 & 10783 & 9205 & 10696 & 10369 & 10273 & 10847 & 7796\\
2191 & 9257 & 11025 & 10447 & 10436 & 9114 & 10747 & 10423 & 10120 & 10857 & 7755\\
2201 & 9153 & 11025 & 10456 & 9881 & 9114 & 10779 & 10492 & 9919 & 10874 & 7727\\
2211 & 8533 & 11025 & 10473 & 9299 & 8477 & 10815 & 10603 & 9745 & 10876 & 7607\\
2221 & 8359 & 11025 & 10520 & 9050 & 8477 & 10853 & 10681 & 9519 & 10874 & 7419\\
2231 & 7936 & 11025 & 10607 & 8802 & 8477 & 10868 & 10730 & 9261 & 10823 & 7384\\
2241 & 7517 & 11025 & 10640 & 8459 & 7840 & 10901 & 10773 & 7748 & 6573 & 7352\\
2251 & 7067 & 11025 & 10708 & 8251 & 7749 & 10921 & 10815 & 4884 & 5581 & 7339\\
2261 & 6579 & 11025 & 10745 & 7849 & 7293 & 10933 & 10846 & 3719 & 5222 & 7270\\
2271 & 5994 & 11025 & 10864 & 7501 & 6930 & 10954 & 10881 & 3403 & 5052 & 7276\\
2281 & 5475 & 11025 & 10935 & 7096 & 6930 & 10954 & 10923 & 3245 & 5021 & 7276\\
2291 & 4786 & 11025 & 10975 & 6653 & 6111 & 10973 & 10925 & 3212 & 5026 & 7253\\
\hline
\end{tabular}
\end{center}

\begin{center}
\begin{tabular}{ |c|c|}
 \hline
 \multicolumn{2}{|c|}{Condition~\eqref{cond.2} from Proposition~\ref{p.1}} \\
 \hline
year & afsluitdijk \\  \hline
2020 & 2380\\
2021 & 2016\\
2026 & 2016\\
2031 & 2016\\
2036 & 2016\\
2041 & 2016\\
2046 & 2016\\
2051 & 2016\\
2056 & 2016\\
2061 & 2016\\
2066 & 2016\\
2071 & 2016\\
2076 & 2016\\
2081 & 2016\\
2086 & 2016\\
2091 & 2016\\
2096 & 2016\\
2101 & 2016\\
2111 & 2016\\
2121 & 2016\\
2131 & 2016\\
2141 & 2016\\
2151 & 2016\\
2161 & 2016\\
2171 & 2016\\
2181 & 2016\\
2191 & 2016\\
2201 & 2016\\
2211 & 2016\\
2221 & 2016\\
2231 & 2016\\
2241 & 2016\\
2251 & 2016\\
2261 & 2016\\
2271 & 2016\\
2281 & 2016\\
2291 & 2016\\
\hline
\end{tabular}
\end{center}

\begin{center}
\begin{tabular}{ |c|c|c|c|c|c|c|c|c|c|c| }
 \hline
 \multicolumn{11}{|c|}{Condition~\eqref{cond.3} from Proposition~\ref{p.1}} \\
 \hline
year & zwf & nop & nfl & wfn & wie & ijd & mas & vol & sal & ovl \\ \hline
2020 & 2380 & 2380 & 2380 & 2380 & 2380 & 2380 & 2380 & 2380 & 2380 & 2380\\
2021 & 2162 & 2188 & 2016 & 2089 & 2104 & 2018 & 2217 & 2180 & 2210 & 2085\\
2026 & 2162 & 2188 & 2016 & 2089 & 2104 & 2018 & 2217 & 2180 & 2210 & 2085\\
2031 & 2162 & 2188 & 2016 & 2089 & 2104 & 2018 & 2217 & 2180 & 2210 & 2085\\
2036 & 2162 & 2188 & 2016 & 2089 & 2104 & 2018 & 2217 & 2180 & 2210 & 2085\\
2041 & 2162 & 2188 & 2016 & 2089 & 2104 & 2018 & 2217 & 2180 & 2210 & 2085\\
2046 & 2162 & 2188 & 2016 & 2089 & 2104 & 2018 & 2217 & 2180 & 2210 & 2085\\
2051 & 2162 & 2188 & 2016 & 2089 & 2104 & 2018 & 2217 & 2180 & 2210 & 2085\\
2056 & 2162 & 2188 & 2016 & 2089 & 2104 & 2018 & 2217 & 2180 & 2210 & 2085\\
2061 & 2162 & 2188 & 2016 & 2089 & 2104 & 2018 & 2217 & 2180 & 2210 & 2085\\
2066 & 2162 & 2188 & 2016 & 2089 & 2104 & 2018 & 2217 & 2180 & 2210 & 2085\\
2071 & 2162 & 2188 & 2016 & 2089 & 2104 & 2018 & 2217 & 2180 & 2210 & 2085\\
2076 & 2162 & 2188 & 2016 & 2089 & 2104 & 2018 & 2217 & 2180 & 2210 & 2085\\
2081 & 2162 & 2188 & 2016 & 2089 & 2104 & 2018 & 2217 & 2180 & 2210 & 2085\\
2086 & 2162 & 2188 & 2016 & 2089 & 2104 & 2018 & 2217 & 2180 & 2210 & 2085\\
2091 & 2162 & 2188 & 2016 & 2089 & 2104 & 2018 & 2217 & 2180 & 2210 & 2085\\
2096 & 2162 & 2188 & 2016 & 2089 & 2104 & 2018 & 2217 & 2180 & 2210 & 2085\\
2101 & 2162 & 2188 & 2016 & 2089 & 2104 & 2018 & 2217 & 2180 & 2210 & 2085\\
2111 & 2162 & 2188 & 2016 & 2089 & 2104 & 2018 & 2217 & 2180 & 2210 & 2085\\
2121 & 2162 & 2188 & 2016 & 2089 & 2104 & 2018 & 2217 & 2180 & 2210 & 2085\\
2131 & 2162 & 2188 & 2016 & 2089 & 2104 & 2018 & 2217 & 2180 & 2210 & 2085\\
2141 & 2162 & 2188 & 2016 & 2089 & 2104 & 2018 & 2217 & 2180 & 2210 & 2085\\
2151 & 2162 & 2188 & 2016 & 2089 & 2104 & 2018 & 2217 & 2180 & 2210 & 2085\\
2161 & 2162 & 2188 & 2016 & 2089 & 2104 & 2018 & 2217 & 2180 & 2210 & 2085\\
2171 & 2162 & 2188 & 2016 & 2089 & 2104 & 2018 & 2217 & 2180 & 2210 & 2085\\
2181 & 2162 & 2188 & 2016 & 2089 & 2104 & 2018 & 2217 & 2180 & 2210 & 2085\\
2191 & 2162 & 2188 & 2016 & 2089 & 2104 & 2018 & 2217 & 2180 & 2210 & 2085\\
2201 & 2162 & 2188 & 2016 & 2089 & 2104 & 2018 & 2217 & 2180 & 2210 & 2085\\
2211 & 2162 & 2188 & 2016 & 2089 & 2104 & 2018 & 2217 & 2180 & 2210 & 2085\\
2221 & 2162 & 2188 & 2016 & 2089 & 2104 & 2018 & 2217 & 2180 & 2210 & 2085\\
2231 & 2162 & 2188 & 2016 & 2089 & 2104 & 2018 & 2217 & 2180 & 2210 & 2085\\
2241 & 2162 & 2188 & 2016 & 2089 & 2104 & 2018 & 2217 & 2180 & 2210 & 2085\\
2251 & 2162 & 2188 & 2016 & 2089 & 2104 & 2018 & 2217 & 2180 & 2210 & 2085\\
2261 & 2162 & 2188 & 2016 & 2089 & 2104 & 2018 & 2217 & 2180 & 2210 & 2085\\
2271 & 2162 & 2188 & 2016 & 2089 & 2104 & 2018 & 2217 & 2180 & 2210 & 2085\\
2281 & 2162 & 2188 & 2016 & 2089 & 2104 & 2018 & 2217 & 2180 & 2210 & 2085\\
2291 & 2162 & 2188 & 2016 & 2089 & 2104 & 2018 & 2217 & 2180 & 2210 & 2085\\
\hline
\end{tabular}
\end{center}
}

\newpage

\normalsize
\section{Alternative approaches}
\label{sec:altapproach}
A feasible solution to the IP presented in Sect.~\ref{sec:modeldesc} can be interpreted as a choice of height $h^d(t)$ for each dike segment at each time period $t$, and a height $h^b(t)$ of the barrier dam.
Abstractly, the cost of these height series can be written as a sum of cost terms which depend only on the `upgrade' done in period $t$ to segment $d$ (i.e., a heightening of the dike, or merely the maintenance cost); we denote this by $\mathsf{cost}^d(h^d(t-1),h^d(t),t)$ for segment $d$, and by $\mathsf{cost}^b(h^b(t-1), h^b(t),t)$ for the barrier.
Finally, there is also an expected damage cost for upgrading the dike and barrier to heights $h^d(t)$ and $h^b(t)$ in period $t$, denoted by $\mathsf{dam}^{d,b}(h^b(t),h^d(t),t)$.
The problem modeled in Sect.~\ref{sec:modeldesc} can thus be written in the following way:
\begin{align}
\label{eqn:ipremodel}
  %\mathsf{d-opt} = \\
  \min \Big\{\sum_{t \in [T]} \mathsf{cost}^b&(h^b(t-1), h^b(t),t) + \sum_{d \in D}  \mathsf{cost}^d(h^d(t-1),h^d(t), t)  + \mathsf{dam}^{d,b}(h^b(t),h^d(t),t)  \\
\textnormal{s.t.} \quad & h^d(t) \in H_D, h^b(t) \in H_B  \text{ for } d \in D, t \in T \\
& h^d(t) \geq h^d(t-1) \text{ for } d \in D, t \in T \\
& h^b(t) \geq h^b(t-1) \text{ for } t \in T\Big\}
\end{align}
The linear relaxation of the IP model presented in Sect.~\ref{sec:modeldesc} can be solved in time polynomial in $|D|,|T|,|H_D|$, and $|H_B|$.
However, in general there is no guarantee that the returned solution is integral, see Sect.~\ref{sec:integrability}.
In the next two sections we describe two different approaches to solving this problem.
Both approaches have the benefit of solving the integer problem exactly.
However, this comes at a cost: both approaches give a polynomial time algorithm only if one of the parameters is regarded as a constant.
The first approach is to solve the integer program by ways of a dynamic program.
The second approach comes down to enumerating all possible height profiles of the barrier dam, and for each profile solving shortest path problems on small graphs.

\subsection{Dynamic programming}\label{sec:DP}
There are two key observations to be made.
First, the second part of the objective function decomposes naturally into a sum of $|D|$ terms, each of which depends only on the barrier height and one segment. Secondly, for each time period the cost only depends on the dike/barrier heights at times $t-1$ and~$t$.
Together this allows us to solve the problem using a dynamic program.
The recursion will be on the time period.
We maintain a table which stores values $\mathsf{opt} (h^b, \mathbf{h^s}, t)$ for all $t \in T, h^b \in H_B, \mathbf{h^d} \in (H_D)^D$.
Their interpretation is as that $ \mathsf{opt} (h^b, \mathbf{h^d}, t)$ is equal to the minimum cost made, up to time~$t$, when the barrier and segments are of height $h^b$ and $\mathbf{h^d}$ at time period $t$ respectively.
We can compute the entries of this table by means of the following recursion:
\begin{align*}
  \mathsf{opt} (h^b,\mathbf{h^d}, t) = \min\Big\{ &\mathsf{opt}(h^b-i^b, \mathbf{h^d}-\mathbf{i^d},t-1) +\mathsf{cost}^b(h^b-i^b, h^b, t) +\\
                                                  &\mathsf{cost}(\mathbf{h^d}-\mathbf{i^d},\mathbf{h^d}, t) + \mathsf{dam}(h^b,\mathbf{h^d},t): \\
                                                  &\qquad h^b-i^b \in H_B, \mathbf{h^d} - \mathbf{i^d} \in (H_D)^{|D|} \Big\}
\end{align*}
It follows that each entry of the table can be computed in time $\mathcal O (|H_B| |H_D|^{|D|})$. Hence, all entries of the table can be filled in time $\mathcal{O}\big((|H_B| |H_D|^{|D|})^2 \cdot |T|\big)$.
Using the interpretation of $\mathsf{opt}(h^b, \mathbf{h^d},t)$, it follows that the optimum of \eqref{eqn:ipremodel} is equal to
\[
  \min_{h^b \in H_B, \mathbf{h^d} \in (H_D)^{|D|}}  \mathsf{opt}(h^b, \mathbf{h^d},T) \enspace .
\]

This shows the following result:
\begin{theorem}
  One can determine the optimal value of \eqref{eqn:ipremodel} in time $\mathcal{O}\big((|H_B| |H_D|^{|D|})^2 \cdot |T|\big)$.
\end{theorem}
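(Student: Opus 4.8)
The plan is to establish the bound via a dynamic program that processes the time periods in increasing order, exactly as set up in the recursion above. First I would pin down the state space: a state is a triple $(h^b,\mathbf{h^d},t)$ recording the barrier height, the vector of segment heights, and the time period, and I would define $\mathsf{opt}(h^b,\mathbf{h^d},t)$ to be the minimum total cost (investment, maintenance, and expected damage) incurred over periods $0,\ldots,t$ among all feasible height schedules that end at time $t$ with exactly these heights. There are $|H_B|\cdot|H_D|^{|D|}$ possible height configurations and $|T|+1$ time periods, so the table has $\mathcal{O}(|H_B||H_D|^{|D|}|T|)$ entries.

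The heart of the argument is the optimal-substructure claim that validates the recursion. Here I would exploit the two structural observations noted above: the cost accrued in period $t$ depends only on the heights at times $t-1$ and $t$ (so the objective is additively separable across consecutive time slices), and the expected-damage term at time $t$ depends on the current configuration alone (it decomposes as $\sum_{d}\mathsf{dam}^{d,b}(h^b(t),h^d(t),t)$ and is thus computable from the state). I would prove by induction on $t$ that the stated recursion computes the true subproblem optimum: any optimal schedule ending at configuration $(h^b,\mathbf{h^d})$ at time $t$ restricts to an optimal schedule ending at its configuration at time $t-1$, and conversely any optimal prefix extends, because the only coupling between the prefix (periods $\le t-1$) and the last step is through the configuration at time $t-1$, which the state records in full. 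The monotonicity constraints $h^b(t)\geq h^b(t-1)$ and $h^d(t)\geq h^d(t-1)$ are respected by restricting the minimization to previous configurations that are coordinatewise $\leq$ the current one (equivalently, nonnegative increments $i^b,\mathbf{i^d}$). The base case is the initial configuration at $t=0$ fixed by \eqref{eq.ini} and \eqref{eq.ini2}.

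For the running time I would bound the work per entry. Filling $\mathsf{opt}(h^b,\mathbf{h^d},t)$ requires minimizing over all admissible predecessors $(h^b-i^b,\mathbf{h^d}-\mathbf{i^d})$; there are at most $|H_B|$ choices for the previous barrier height and at most $|H_D|$ choices for each of the $|D|$ segment heights, hence at most $|H_B||H_D|^{|D|}$ candidate predecessors, and each candidate is evaluated in time proportional to computing the cost and damage terms (the damage term being a sum of $|D|$ per-segment contributions, which is dominated by the predecessor count). Thus each entry costs $\mathcal{O}(|H_B||H_D|^{|D|})$, and multiplying by the number of entries gives the claimed $\mathcal{O}\big((|H_B||H_D|^{|D|})^2\cdot|T|\big)$. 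Finally I would read off the optimum of \eqref{eqn:ipremodel} as $\min_{h^b,\mathbf{h^d}}\mathsf{opt}(h^b,\mathbf{h^d},T)$.

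I expect the main obstacle to lie not in any single calculation but in carefully justifying the optimal-substructure step — in particular, making precise that the configuration at time $t-1$ is a \emph{sufficient statistic} separating the past from the present, so that no information beyond the current heights need be carried in the state. Once that separation is established, both the correctness induction and the complexity count are routine.
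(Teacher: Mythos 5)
Your proposal follows exactly the paper's own argument: the same table $\mathsf{opt}(h^b,\mathbf{h^d},t)$, the same recursion over time periods exploiting that costs couple only consecutive configurations and that the damage term decomposes per segment, and the same count of $\mathcal{O}(|H_B||H_D|^{|D|}|T|)$ entries each filled in $\mathcal{O}(|H_B||H_D|^{|D|})$ time. The additional care you devote to the sufficient-statistic/optimal-substructure justification is sound and merely makes explicit what the paper leaves implicit.
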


%\subsubsection{Improved dynamic program}

\subsection{Shortest paths}
In the previous section we have seen an algorithm for computing the optimal dike/barrier height profiles which has polynomial runtime for a fixed number of dike segments, in this section we present a different algorithm, based on shortest paths, that runs in polynomial time when the number of possible barrier heights is fixed.
We present an algorithm that computes the optimal value of \eqref{eqn:ipremodel} in time
\[
\mathcal{O}\left(\overbrace{|D|}^{\textsf{\# segments}} \cdot \underbrace{(T\cdot |H_D|)^2}_{\textsf{complexity shortest path}} \cdot \overbrace{T^{|H_B|}}^{\textsf{\# barrier height profiles}} \right).
\]
To illustrate the basic idea we first discuss the algorithm for the setting of one dike segment and no barrier, we then add a barrier dam and from that the generalization to multiple dike segments and barriers easily follows.

\subsubsection{One dike segment, no barrier}
First consider the situation with only one dike segment and no barrier.
In this case the problem of minimizing the cost at time period $T$ becomes equivalent to finding a shortest $p$-$q$ path in the following graph.
The source $p=(0,0)$ is the initial height of the dike at time $0$.
Then, for each time $t \in \{1,\hdots,T\}$ and each possible height of the dike $h$, we define a node $(t,h)$.
Finally we define a sink node $q$.
The edges are defined as follows.
We first add an edge between $(0,0)$ and $(1,h)$ for each $h \in H_D$, with weight $\mathsf{cost}(0,h,1)$, similarly for each $t \in \{1,\hdots,T\}$ and height pair $h_1 \leq h_2$ there is an edge from $(t-1,h_1)$ to $(t,h_2)$ with weight $\mathsf{cost}(h_1,h_2,t)$ equal to the financial cost associated to the decision of raising the dike segment from height $h_1$ to $h_2$ in time period $t$.
Notice that since there is no barrier, we can assume that the expected damage cost $\mathsf{dam}(t,h)$ are incorporated in $\mathsf{cost}(h_1,h_2,t)$.
Finally, the nodes $(T,h)$ are all connected to the sink $q$.
In the figure below the incoming and outgoing arcs of a node $(t,h_2)$ are sketched for some $0<t<T$ and $h_2 \in H_D$.
One observes that, indeed, the shortest $p$-$q$ path corresponds to the best strategy of heightening this dike segment.

Recall, the shortest $p$-$q$ path in a graph $G = (V,E)$ with non-negative edge weights can be found in time $\mathcal O (|V|^2)$ using Dijkstra's algorithm.

\begin{center}
  \begin{tikzpicture}[scale=0.85, shorten >=1pt,->]
    \tikzstyle{vertex}=[circle,fill=black!25,minimum size=15pt,inner sep=0pt]

    \node[vertex,label=left:{$(t-1,h_2)$}] (n0) at (0,0) {};
    \node[vertex,label=left:{$(t-1,h_2-1)$}] (n1) at (0,-1.5) {};
    \node[vertex,label=left:{$(t-1,h_2-2)$}] (n2) at (0,-3) {};

    \node[vertex,label={$(t,h_2)$}] (m0) at (5,0) {};

    \node[vertex,label=right:{$(t+1,h_2)$}] (r0) at (10,0) {};
    \node[vertex,label=right:{$(t+1,h_2+1)$}] (r1) at (10,1.5) {};
    \node[vertex,label=right:{$(t+1,h_2+2)$}] (r2) at (10,3) {};

    \draw[line width= 1pt] (n0) -- (m0);
    \draw[line width= 1pt] (n1) -- (m0);
    \draw[line width= 1pt] (n2) -- (m0);
    \draw[line width= 1pt] (m0) -- (r0);
    \draw[line width= 1pt] (m0) -- (r1);
    \draw[line width= 1pt] (m0) -- (r2);

    \node at (2.5,0.25) {$\mathsf{cost}(t,h_1,h_2)$};
  \end{tikzpicture}
\end{center}

\subsubsection{One dike segment, a barrier}
\label{sec:CPB one dike one barrier}
We now consider the case of a single dike segment and a barrier.
The observation we need to make is that the total financial cost incurred by upgrading the dike segment from height $h_1$ to height $h_2$ in time period $t$ no longer only depend on the dike segment, they also depend on the height of the barrier at time point $t$.
This means that we cannot solve a shortest path problem for the barrier and dike segment separately: the costs on the dike segment graph depend on the path chosen in the barrier graph.

The key idea is that if we fix the height of the barrier at each time $t$, then we reduce to the previous setting where all the costs are known.
Hence, the optimal value of \eqref{eqn:ipremodel} can be found by minimizing over the possible height profiles $h^b(t)$ of the barrier over time, the minimum cost of a $p$-$q$ path in the network defined in the previous section (using the costs associated to $h^b(t)$) plus the cost of implementing height profile $h^b(t)$.
The outer minimization over the possible height profiles $h^b(t)$ is performed by enumeration, which takes time $\mathcal O(T^{|H_B|})$.
This means that the optimal investment strategy for both the dike segment and barrier can be found in time
\[
  \mathcal{O}\left((T\cdot|H_D|)^2 \cdot {T \choose |H_B|}\right) = \mathcal{O}\left((T\cdot|H_D|)^2 \cdot T^{|H_B|}\right) \enspace.
\]

\subsubsection{Multiple dike segments and a barrier}
The approach of the previous section easily generalizes to the setting of multiple dike segments and a barrier.
Once a height profile $h^b(t)$ of the barrier dike is fixed, the optimal height profiles of each of the different dike segments can be computed independently.
Hence the problem of finding the optimal investment strategy for multiple dike segments and a barrier can be solved in time
\[
  \mathcal{O}\left(|D| \cdot (T \cdot |H_D|)^2 \cdot  T^{|H_B|} \right).
\]
This approach generalizes to the setting of multiple barriers and dike segments (where the costs of a dike segment at time $t$ may depend on the height of several barriers).
The complexity will be of the form
\[
  \mathcal{O}\left(|D| \cdot (T\cdot|H_D|)^2 \cdot  T^{|H_B||B|} \right),
\]
where $|B|$ is the number of barriers.
One should note that the above approach assumes the same discretization in time of the barrier and dike segments. It seems reasonable to assume a coarser discretization for the barrier of say $T_B$ steps, this would reduce the above-mentioned formula to
\[
  \mathcal{O}\left(|D| \cdot (T\cdot|H_D|)^2 \cdot  (T_B)^{|H_B||B|} \right) \enspace .
\]

\section{An abstraction of the dike heightening problem}
\label{sec:abstraction}
In this section we present a natural abstract version of the dike heightening problem, which allows for several variations and questions, which we believe have not been considered in the literature before.
We believe that studying these variations may shed more light on the complexity of the dike height problem.

In the dike height problem we essentially have two directed graphs where each path in one of the two graphs (the one modeling the height of the barrier dam) influences the cost of arcs in the other graph.
It is not difficult to show that if we were to allow any kind of influence of the path in the one graph on the cost of arcs in the other graph, the problem would become $\mathsf{NP}$-hard.
Indeed, one can easily show that in this case the problem contains the problem of finding two vertex disjoint paths in a directed graph, which is $\mathsf{NP}$-complete~\cite{cpb:fortune}.

For this reason, we consider the following restricted problem.
\begin{definition}
  For $k\in \mathbb{N}$, a $k$-\emph{layered} graph is a directed graph $D=(V,A)$ such that $V$ is partitioned into \emph{layers} $V=V_0\cup V_1\cup \ldots \cup V_k\cup V_{k+1}$ such that each $a\in A$ is from $V_i$ to~$V_{i+1}$ for some $i=0,\ldots,k$, where $V_0$ and $V_{k+1}$ both consist of a single vertex and where $|V_1|=|V_2|=\cdots=|V_k|$.
  We denote the arcs between $V_i$ and $V_{i+1}$ by $A[V_i,V_{i+1}]$ and we refer to $|V_1|$ as the \emph{partition size}.
\end{definition}

With this in mind, we define the {\sc Minimum Intertwined Cost Path} problem as follows.
The problem takes as input $d+1$ $k$-layered graphs $G^1=(V^1,A^1),G^2=(V^2,A^2)$,$\ldots$,$G^{d+1}=(V^{d+1},A^{d+1})$ with partitions $V^j=V^{(j)}_1\cup \ldots \cup V^{(j)}_{k+1}$ and cost functions $c^j:A^j\to \mathbb{R}_{\geq 0}$ for $j=1,\ldots,d+1$, and for each $i=1,\ldots,k$ and $t=2,\ldots,d+1$ maps $m_i^t:V^{(t)}_i\times A^1[V^1_{i-1},V^1_{i}]\to \mathbb{R}_{\geq 0}$.

Given $d+1$ paths $P^1,P^2,\ldots,P^{d+1}$ with $P^j = (a_1^j,v_1^j,a_2^j,v_2^j,\hdots,a_k^j,v_k^j,a_{k+1}^j)$ from $V_0^{(j)}$ to $V^{(j)}_{k+1}$ with $a^j_i = (v_i^j,v_{i+1}^j)$ for $j = 1,\ldots,d+1$, we define the \emph{cost} of the $(d+1)$-tuple $(P^1;P^2,\ldots,P^{d+1})$ as
\[
  \text{cost}(P^1;P^2,\ldots,P^{d+1})=\sum_{i=1}^{k+1}\sum_{t=1}^{d+1} c^t(a^t_i)+\sum_{i=1}^{k+1}\sum_{t=2}^{d+1} m_i^t(v^t_i,a_i^1) \enspace .
\]
The objective is to compute the $(d+1)$-tuple of paths $(P^{1*};P^{2*},\ldots,P^{d+1*})$ with minimum cost over all such 
$(d+1)$-tuples.

In the {\sc Minimum Intertwined Cost Path} problem, the dependence of $\text{cost}(P^1;P^2,\ldots,P^{d+1})$ on the paths $P^2,\ldots,P^{d+1}$ is linear in the edges of $P^2,\ldots,P^{d+1}$.
Note that the IP problem from Sect.~\ref{sec:modeldesc} is a specific case of the {\sc Minimum Intertwined Cost Path} problem where 
%the cost function $\text{cost}(P^1;P^2,\ldots, P^{d+1})$ has the form
%\[
%  \text{cost}(P^1;P^2,\ldots, P^{d+1})=\sum_{i=1}^{k+1}\sum_{t=1}^{d+1} c^t(a^t_i)+\sum_{i=1}^{k+1}\sum_{t=2}^{d+1} m_i^t(v^t_i,a_i^1) \enspace .
%\]
%Here, 
the barrier acts as $P^1$, each of the dikes is represented one path $P^j$, $j=2,\ldots,d+1$,  and the cost functions $m_i^t$ only depends on the vertices $m_i^t(v^t_i,a_i^1)=m_i^t(v^t_i,v_i^1)$, in addition to the edges between $V_i$ and $V_{i+1}$ being restricted (only connecting vertices of non-decreasing heights).

%Furthermore, the dike height problem in Sect.~\ref{sec:CPB one dike one barrier} can also be modeled as a special case of the {\sc Minimum Intertwined Cost Path} problem, where for both graphs the arcs between $V_i$ and $V_{i+1}$ are somewhat restricted.
%More precisely, if we identify each $V^{(2)}_i$ $(i=1,\ldots,k)$ with $H_B=:\{h_1,\ldots,h_t\}$ then the only arcs that are present are of the from $(h_i,h_j)$ with $h_i\leq h_j$.

This particular fact allowed us in Sect.~\ref{sec:CPB one dike one barrier} to give an algorithm for the problem, which runs in polynomial time if we consider the size of the sets in the partition of the vertices of the second graph as a constant.
Clearly if the bipartite graphs between $V^{(2)}_i$ and $V^{(2)}_{i+1}$ are complete, then this dynamic programming approach will not work.
It would be interesting to find out if some other approach may yield an efficient algorithm.

We end this section with some concrete questions.
\begin{que}
\label{que1}
  Is the {\sc Minimum Intertwined Cost Path} problem $\mathsf{NP}$-hard for unbounded number of possible heights?
\end{que}
We do not have an answer for Question~\ref{que1}, but we remark the following: with an appropriate cost function on the updating of the heights of one dike, instances of the {\sc Knapsack} problem can be seen as optimizing the height of one dike.
Indeed, the decision of updating the height of a dike at time $t\in \mathbb{N}$ corresponds to the decision of adding an certain number of copies of an item to the knapsack; the total height of the dike at time $t$ corresponds to the accumulated weight of the chosen items (counting multiplicities) to be carried among the first $t$ items.
The cost function of upgrading the height at time $t$ by $kw_t$ units corresponds to the profit of adding~$k$ copies of the item $t$, whose weight is $w_t$.
The cost function of the upgrading the dikes is such that once the capacity of the knapsack is exceeded by a set of items, then the cost of keeping or upgrading the dike height is unreasonable high.
With this correspondence, we observe that the optimal solution of the Knapsack problem corresponds to the optimal solution of the dike height.
Computing an optimal solution to the {\sc Knapsack} problem is well-known to be $\mathsf{NP}$-hard.
One of the inputs of the {\sc Knapsack} problem is the logarithm of the total weight of the knapsack bag.
Thus the dynamic program proposed in Sec.~\ref{sec:DP} is an exponential time algorithm.

If Question~\ref{que1} has a positive answer, then it makes sense to consider the following questions.
\begin{que}
  Under which conditions on the bipartite graphs $G^j[V^{(j)}_i,V^{(j)}_{i+1}]$, ($j=1,2$, $i=1,\ldots,k)$ is there a polynomial time algorithm for the {\sc Minimum Intertwined Cost Path} problem?
\end{que}

\begin{que}
  Suppose the partition size of $G_2$ is constant.
  Under which conditions on the bipartite graphs $G_j[V^{(j)}_i,V^{(j)}_{i+1}]$ ($j=1,2$, $i=0,\ldots,k)$ is there a polynomial time algorithm for the {\sc Minimum Intertwined Cost Path} problem?
\end{que}

\medskip
\noindent
\textbf{Acknowledgements.}
We thank Kees Roos for helpful discussions and for presenting the ideas of Brekelmans et al.~\cite{BHRE2012}.
We moreover thank Andr\'e Woning from Rijkswaterstaat for useful background information, and Stan van Hoesel and Peter van de Ven for their careful reading of the manuscript.
A large part of the results in this paper were obtained in the context of the 2017 \emph{Study Group Mathematics with Industry} week, for which we want to thank the organizers.
%\section{Conclusions}\label{sec:conclusions}

%\bibliographystyle{abbrvnat}
%\bibliography{bib}

%\bibliographystyle{plain}

\end{document}